\pgfplotsset{compat=1.18}
\theoremstyle{plain}
\newtheorem{definition}{Definition}
\newtheorem{theorem}{Theorem}
\newtheorem{proposition}{Proposition}
\newtheorem{assumption}{Assumption}
\crefname{assumption}{Assumption}{Assumptions}
\Crefname{assumption}{Assumption}{Assumptions}
\newcommand{\notshow}[1]{{}}
\newcommand{\Xcomment}[1]{{}}
\newcommand{\given}{\,\vert\,}
\newcommand{\prob}[2][]{\text{\bf Pr}\ifthenelse{\not\equal{}{#1}}{_{#1}}{}\!\left[{\def\givenn{\middle|}#2}\right]}
\newcommand{\expect}[2][]{\text{\bf E}\ifthenelse{\not\equal{}{#1}}{_{#1}}{}\!\left[{#2}\right]}
\newcommand{\tparen}{\big}
\newcommand{\tprob}[2][]{\text{\bf Pr}\ifthenelse{\not\equal{}{#1}}{_{#1}}{}\tparen[{\def\given{\tparen|}#2}\tparen]}
\newcommand{\texpect}[2][]{\text{\bf E}\ifthenelse{\not\equal{}{#1}}{_{#1}}{}\tparen[{\def\given{\tparen|}#2}\tparen]}
\newcommand{\sprob}[2][]{\text{\bf Pr}\ifthenelse{\not\equal{}{#1}}{_{#1}}{}[#2]}
\newcommand{\sexpect}[2][]{\text{\bf E}\ifthenelse{\not\equal{}{#1}}{_{#1}}{}[#2]}
\newcommand{\rbr}[1]{\left(#1\right)}
\newcommand{\signal}{\sigma}
\newcommand{\rev}{{\rm Rev}}
\newcommand{\opt}{{\rm OPT}}
\newcommand{\val}{v}
\title{Three Tiers and Thresholds: Incentives in Private Market Investing\thanks{Yingkai Li thanks the NUS Start-up Grant for financial support.}}
\author{Jussi Keppo\thanks{Department of Analytics \& Operations,
NUS Business School, National University of Singapore.
Email: \texttt{keppo@nus.edu.sg}.}
\and Yingkai Li\thanks{Department of Economics, National University of Singapore.
Email: \texttt{yk.li@nus.edu.sg}.} }
\date{}
\begin{document}

\maketitle

\begin{abstract}
This paper studies optimal contract design in private market investing, focusing on internal decision making in venture capital and private equity firms. A principal relies on an agent who privately exerts costly due diligence effort and then recommends whether to invest. Outcomes are observable ex post even when an opportunity is declined, allowing compensation to reward both successful investments and prudent decisions to pass. We characterize profit maximizing contracts that induce information acquisition and truthful reporting. We show that three tier contracts are sufficient, with payments contingent on the agent’s recommendation and the realized return. In symmetric environments satisfying the monotone likelihood ratio property, the optimal contract further simplifies to a threshold contract that pays only when the recommendation is aligned with an extreme realized return. These results provide guidance for performance based compensation that promotes diligent screening while limiting excessive risk taking.

\end{abstract}
\noindent \textbf{Keywords:} investments, information, incentives

\noindent \textbf{JEL Codes:} D82, G23, G24, M52

\section{Introduction}
\label{sec:intro}
This paper studies optimal incentive design in private equity and venture capital firms, where senior partners or investment committees rely on junior investment professionals to evaluate uncertain investment opportunities. We consider a delegated decision making problem in which a principal (e.g., a senior partner or an investment committee) must incentivize an agent (e.g., a junior investment professional) to exert costly effort to screen an opportunity and to truthfully report her private recommendation. Motivated by the rapid growth of private equity and venture capital, and by the layered performance contingent compensation observed in practice, we show that optimal incentives take simple, structured forms that mirror common internal schemes. In particular, beyond fund level carried interest, promotion and bonus decisions often hinge on whether an investment clears a return hurdle and, importantly, on whether the agent also avoids weak deals by recommending to pass. Our model provides a theoretical foundation for these practices: the optimal three tier and threshold contracts balance incentives for effortful screening with incentives for truthful reporting, rationalizing why hurdle based rules remain effective inside modern investment organizations.

We model this setting as a principal agent problem with two key frictions: the agent's information gathering effort is unobservable, and her signal about the investment's potential return is private. The central design question is how to structure contracts that align incentives while balancing costs. A distinctive feature of our environment is that the outcome of a declined investment can be observed ex post. This reflects common scenarios in private markets where a competitor completes the deal, or the target company is later acquired, goes public, or raises subsequent funding rounds, all of which generate information about its eventual performance. This observability allows the principal to reward not only successful investments but also the prudent avoidance of poor ones. Moreover, in \cref{subapx:observe}, we show that this observability assumption can be relaxed with negligible profit loss for the principal. 

Our paper makes three contributions:
\begin{enumerate}
\item[$(i)$] We characterize optimal contracts in general environments with convex effort costs and finite return support, without requiring symmetry or the monotone likelihood ratio property (MLRP). We show that a three tier contract, where payments depend on both the agent's recommendation and the realized return, is sufficient.
\item[$(ii)$] Under additional structure (symmetry and MLRP), we show that the optimal contract simplifies to a (symmetric) threshold contract. Such contracts reward the agent only when her recommendation and the realized return cross specified thresholds. This form mirrors practice, for example hurdle based bonuses in VC firms.
\item[$(iii)$] We show that without MLRP, the optimal contract need not have a threshold form, and may exhibit a non threshold structure. In particular, when the signal does not satisfy MLRP, optimal incentives may require non threshold, potentially non monotonic payment schemes to preserve effort and reporting incentives (see Appendix B.2 for an example). More broadly, outside the symmetric benchmark, optimal contracts generally take asymmetric three tier forms that reflect heterogeneous beliefs and screening technologies.
\end{enumerate}

These results provide a theoretical foundation for layered performance based contracts widely used in PE and VC. While simple threshold contracts may be optimal under symmetry and MLRP, our general characterization highlights when more nuanced structures are necessary, especially when information is less well behaved. The analysis has implications for compensation design, hiring, and governance within investment organizations.

\subsection{Literature}
\label{subsec:literature}
This paper connects two major literatures: $(i)$ incentive design for information acquisition and truthful reporting, and $(ii)$ managerial incentives and contract design in finance and operations.

\paragraph{Elicitation and scoring rules.}
A central challenge in designing incentives for information production is eliciting truthful and informative reports from experts and forecasters. The foundation is proper scoring rules, which reward accurate probabilistic reports in expectation when outcomes are verifiable ex post \citep{Savage1971,GneitingRaftery2007}. Building on this methodology, \citet{Nau1985Effective} and \citet{Lichtendahl2007Competition} analyze the properties of scoring rules and competitive forecasting environments, while \citet{KilgourGerchak2004} study competitive rules that pay forecasters based on relative performance. Our paper instead considers a principal--agent setting without competition and designs mechanisms that maximize the principal's expected profit.

Our emphasis on incentivizing \emph{costly} information acquisition (rather than only truthful revelation of existing beliefs) is closely related to work on optimizing scoring rules to strengthen learning incentives. In computer science, \citet{li2022optimization} and \citet{HartlineEtAl2023COLT} develop frameworks for effort sensitive scoring rules in single and multi dimensional settings, and \citet{LiLibgober2023} extend these ideas to dynamic environments. More recently, \citet{wu2024incentivizing} studies continuous effort and shows that, under symmetric MLRP conditions, the optimal transfer takes a cutoff prize form similar to ours. The key difference lies in the objective: much of this literature maximizes forecast accuracy subject to an ex post budget constraint, whereas we maximize profit, the expected investment return net of payments. This shift implies that the optimal contract in our model may deliberately avoid inducing the most precise signal in order to limit commission payments. Moreover, relative to \citet{wu2024incentivizing}, we show that even without symmetry or MLRP, three tier contracts remain sufficient.

\paragraph{Managerial incentives and financial contracting.}
Incentive problems with unobservable effort are classical in economics \citep[e.g.,][]{Holmstrom1979,HolmstromMilgrom1987}. In entrepreneurial finance, venture contracts allocate cash flow and control rights in empirically rich ways \citep{KaplanStromberg2003,KaplanStromberg2004,EwensGorbenkoKorteweg2022}. At the partnership level, \citet{GompersLerner1999} document how compensation in U.S. venture capital limited partnerships varies across funds and over time. Relatedly, \citet{EwensRhodesKropf2015} show that partner level human capital is a first order determinant of VC performance, motivating why internal evaluation and incentive systems matter.

At the fund level, the structure and timing of private equity compensation, including management fees, carried interest, and hurdles, shape pay performance sensitivity and risk sharing \citep{MetrickYasuda2010}; see also \citet{Huther2020PE} for evidence linking contract features to realized outcomes. More broadly, private equity ownership can generate high powered incentives with real consequences for operating choices and outcomes \citep{GuptaHowellYannelisGupta2024}. Our setting connects these financial contracting and incentive based insights with elicitation design: an investment professional's information production is directly shaped by the mechanism used to evaluate and reward it.

A complementary stream studies how contracts transmit private forecasts and influence upstream or downstream decisions. \citet{CachonLariviere2001} examine how to share demand forecasts in supply chains through contractual design, and \citet{OzerWei2006} analyze how asymmetric forecast information interacts with capacity commitments. More broadly, organizational models of information acquisition and implementation (e.g., \citealp{ItohMorita2023}) explore who should learn, who should decide, and how incentives allocate attention and authority. Our analysis brings these themes to private market investing, where a junior investment professional's privately acquired information (or effort to obtain it) interacts with pay and verifiable performance in distinctive ways.

\section{Model}
\label{sec:model}
Our model is motivated by compensation design in private equity (PE) and venture capital (VC) firms, where junior investment professionals evaluate highly uncertain opportunities on behalf of senior partners or an investment committee. The framework captures two central frictions: an \emph{effort problem} (whether the junior investment professional conducts proper due diligence) and a \emph{reporting problem} (whether the junior investment professional truthfully reports her private recommendation).

The principal (the senior partner or the investment committee) faces a binary investment choice $\{I_0, I_1\}$.  
Option $I_0$ is a safe investment yielding a fixed, normalized return of $1$.  
Option $I_1$ is risky, with an uncertain return $\val \in \mathbb{R}_+$.  
There are two possible states for the risky option, $\{l, h\}$, where $l$ denotes the low state and $h$ denotes the high state.  
In state $l$, the return $\val$ is drawn from distribution $F_l$ with mean $\mu_l < 1$.  
In state $h$, $\val$ is drawn from distribution $F_h$ with mean $\mu_h > 1$.  
For tractability, both $F_l$ and $F_h$ are assumed to have finite support of size $n$, denoted $\{\val_1, \dots, \val_n\}$ with $\val_1 < \dots < \val_n$.  
Let $f_l \in \mathbb{R}_+^n$ and $f_h \in \mathbb{R}_+^n$ denote the corresponding probability mass functions, and let $p \in [0,1]$ be the prior probability of the high state $h$.

The agent (the junior investment professional) may exert costly effort to acquire a binary signal $\signal \in \{\signal_l, \signal_h\}$ about the underlying state. The signal is accurate with probability
\[
\gamma \triangleq \Pr[\signal_l | l] = \Pr[\signal_h | h] \geq \tfrac{1}{2},
\]
at a cost $c(\gamma)$.  
We assume $c(1/2) = 0$, so no cost is incurred without information acquisition.

\begin{assumption}[Convexity]\label{asp:convex}
The cost function $c(\gamma)$ is increasing and convex in accuracy $\gamma \in [\frac{1}{2},1]$.
\end{assumption}

The convexity of $c(\gamma)$ reflects the rising marginal effort required for more precise due diligence. Here, effort corresponds to the degree of signal accuracy $\gamma$, which can be interpreted as the depth of analysis. In practice, moving from a preliminary assessment (e.g., 60\%–70\% confidence) to a highly confident view (e.g., 80\%–90\%) typically requires disproportionately more time and resources. This captures a central tension in investment decision making: higher-quality analysis improves selection but benefits the analyst only indirectly unless incentives are aligned.

If the agent acquires information with an accuracy $\gamma \in [\frac{1}{2},1]$ and reports it truthfully, the principal's expected return is given by
\[
V(\gamma) = \Pr\nolimits_{\gamma}[\signal_h] \cdot \max\{1, \mathbb{E}_\gamma[\val|\signal_h]\} 
+ \Pr\nolimits_{\gamma}[\signal_l] \cdot \max\{1, \mathbb{E}_\gamma[\val|\signal_l]\}
\]
where $\Pr\nolimits_{\gamma}$ and $\mathbb{E}_\gamma$ indicate that the random signals have an accuracy of $\gamma$.

\subsection{Contracts}
The agent derives no intrinsic utility from information acquisition. To incentivize effort, the principal offers a contract.  
In our framework, the realized return $v$ of the risky option $I_1$ is observable to the principal ex post, regardless of whether the investment was made. This allows the contract to be contingent on the outcome even when the recommendation was to ``pass''. Such observability is plausible when, for example, a passed-on company is later funded by a competitor, or its valuation is revealed through a subsequent financing round or public market event.

More specifically, a contract $S$ is a mapping from the agent's report and realized return to a nonnegative payment:  
\[
S: \{\signal_l, \signal_h\} \times \mathbb{R} \rightarrow \mathbb{R}_+.
\]  
That is, $S(\signal,\val)$ specifies the payment to the agent given report $\signal$ and outcome $\val$.\footnote{Note that here we assume that the report space is identical to the signal space of the agent. This is without loss by the revelation principle \citep{myerson1981optimal}.}  
For convenience, let $r_l = S(\signal_l,\cdot)$ and $r_h = S(\signal_h,\cdot)$ denote the payment vectors conditional on reporting $\signal_l$ and $\signal_h$, and write $r=(r_l,r_h)$ for the full contract.

Given contract $r$, the total payment to the agent under accuracy $\gamma$ is given by
\begin{align*}
T(\gamma;r) = \max\Big\{ & (1-p)(\gamma f_l\cdot r_l + (1-\gamma)f_l \cdot r_h ) + p(\gamma f_h\cdot r_h + (1-\gamma)f_h \cdot r_l), \\
& (1-p)(\gamma f_l\cdot r_h + (1-\gamma)f_l \cdot r_l ) + p(\gamma f_h\cdot r_l + (1-\gamma)f_h \cdot r_h), \\
& (1-p)f_l \cdot r_l + p f_h \cdot r_l, \quad (1-p)f_l \cdot r_h + p f_h \cdot r_h \Big\},
\end{align*}
where the maximization reflects the possibility of misreporting.

The agent chooses accuracy to maximize utility
\[
U(\gamma;r) = T(\gamma;r) - c(\gamma),
\]
with optimal effort
\[
\gamma^*(r) = \arg\max_{\gamma} U(\gamma;r).
\]
The principal then selects $r$ to maximize expected net return:
\[
\rev(r) = V(\gamma^*(r)) - T(\gamma^*(r);r).
\]

\section{Optimal Contracts}
\label{sec:optimal}
We characterize optimal contracts that maximize the principal's payoff. 
The central tension is that the principal seeks strong incentives for the agent to acquire precise information, but providing such incentives requires high compensation when predictions are correct, which reduces net returns. 
The optimal contract therefore balances these forces, rewarding information acquisition while minimizing expected transfers.

We show that it suffices to restrict attention to contracts with three reward levels, based on the agent's recommendation and the realized return of the risky option.

\begin{definition}[Three tier contract]
\label{def:3-tier contract}
A contract $S$ is a three tier contract if there exist rewards $r_1,r_2\in\mathbb{R}_+$ such that 
$S(\signal,\val)\in \{0,r_1,r_2\}$ for all reports $\signal$ and realized returns $\val$.
\end{definition}

Before stating the formal result, it is useful to build intuition. 
The principal must both $(i)$ incentivize the agent to exert costly effort and $(ii)$ ensure truthful reporting of signals.  
Rewarding only ``invest \& high return'' would encourage exaggerated optimism, while rewarding only caution would encourage excessive pessimism. 
The cheapest way to achieve both goals is to reward aligned outcomes on both sides—``invest \& high return'' and ``pass \& low return''—and pay nothing elsewhere. 
This logic naturally yields a sparse three tier structure.

\begin{theorem}[Three tier optimality]\label{thm:optimal_contract}
Under \cref{asp:convex}, there exists an optimal contract that provides a three tier reward for the agent.
\end{theorem}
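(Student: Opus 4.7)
The plan is to recast the problem of finding the cheapest contract that induces a given target accuracy $\gamma^{*}$ as a linear program in the payment vector $r=(r_{l},r_{h})\in\mathbb{R}_{+}^{2n}$, and then invoke a standard LP vertex argument. Fix $\gamma^{*}\in(1/2,1]$ and write $T_{0}(\gamma^{*};r)$ for the first (truth-telling) term inside the $\max$ defining $T(\gamma^{*};r)$. By the revelation principle it is without loss to require truthful reporting at $\gamma^{*}$, so the objective becomes the linear function $T_{0}(\gamma^{*};r)$. Because $T_{0}(\gamma;r)$ is linear in $\gamma$ and $c$ is convex by \cref{asp:convex}, the agent's best-response problem under truthful reporting is concave, so at an interior $\gamma^{*}$ the effort-incentive condition reduces to the FOC $[pf_{h}-(1-p)f_{l}]\cdot(r_{h}-r_{l})=c'(\gamma^{*})$. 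Inspecting the remaining three terms of $T(\gamma;r)$, both constant-report strategies and the swap strategy are maximized over $\gamma$ at $\gamma=1/2$, and swap-at-$1/2$ is dominated by the larger of the two constant-report values. So the only non-effort deviations that must be blocked are ``report $\sigma_{l}$ always'' and ``report $\sigma_{h}$ always'' at zero cost, producing two linear participation inequalities $T_{0}(\gamma^{*};r)-c(\gamma^{*})\geq[(1-p)f_{l}+pf_{h}]\cdot r_{l}$ and $T_{0}(\gamma^{*};r)-c(\gamma^{*})\geq[(1-p)f_{l}+pf_{h}]\cdot r_{h}$. Together with $r\geq 0$ this yields a linear program.

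The LP has a non-empty feasible region inside the non-negative orthant with objective bounded below by $0$, so it attains its optimum at a vertex, where $2n$ active constraints have linearly independent gradients. Exactly one active constraint is the FOC equality; the remaining $2n-1$ must come from the two participation inequalities and the $2n$ non-negativity constraints. The main obstacle, in my view, is ruling out the possibility that both participation inequalities bind with gradients independent of the FOC, which would otherwise leave room for three strictly positive entries in $r$. I would resolve it by direct algebra: summing the two binding participation constraints and substituting the FOC yields the scalar identity $c'(\gamma^{*})(\gamma^{*}-\tfrac{1}{2})=c(\gamma^{*})$, a condition on $c$ alone. Under strict convexity of $c$ with $c(1/2)=0$, a mean-value argument gives $c(\gamma^{*})<c'(\gamma^{*})(\gamma^{*}-\tfrac{1}{2})$, so both cannot bind. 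In the knife-edge case where $c$ is affine on $[1/2,\gamma^{*}]$, both bindings do become consistent, but then the two bindings together with the FOC contribute only two linearly independent conditions on $r$ (namely the FOC itself and $[pf_{h}+(1-p)f_{l}]\cdot(r_{h}-r_{l})=0$, obtained as the difference of the two bindings). Either way, at most two linearly independent non-trivial constraints are active at the optimal vertex, so at least $2n-2$ non-negativity constraints are active and $r$ has at most two strictly positive entries.

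A contract with at most two positive payment entries takes at most three distinct values ($0$ and the two positive amounts) and therefore satisfies \cref{def:3-tier contract}. The corner cases $\gamma^{*}=1/2$ (no effort, optimal contract identically zero) and $\gamma^{*}=1$ (the FOC becomes a one-sided inequality, but the vertex count and the collapse argument are unchanged) are handled by the same template. Optimizing over $\gamma^{*}$ selects the principal's best target accuracy, yielding an optimal three-tier contract.
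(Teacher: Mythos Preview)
Your proof is correct and takes essentially the same approach as the paper: fix a target $\gamma^{*}$, set up the payment-minimization LP with the first-order moral-hazard equality and the two constant-report IC inequalities, then invoke an extreme-point argument to conclude that at most two payment entries are strictly positive. The paper's treatment of the knife-edge case (both ICs binding when $c$ is affine on $[1/2,\gamma^{*}]$) via the reduced two-constraint program (OPT-Bind) is exactly your rank-collapse observation that the FOC together with both bindings has rank two.
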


The three tier contract in Theorem~\ref{thm:optimal_contract} mirrors layered compensation schemes often observed in PE/VC firms. Junior and mid-level professionals commonly face incentives where $(i)$ failed or poor-return investments yield no reward, $(ii)$ modest success provides baseline recognition or promotion signals, and $(iii)$ high returns—especially when tied to bold or contrarian recommendations—unlock the highest bonuses or long-term incentives. Our model shows that such structures are not only practical but also theoretically sufficient for optimal incentive design. Moreover, the three tier form facilitates communication and buy-in, avoiding overly complex contingent contracts that are difficult to verify or explain internally.

The proof of \cref{thm:optimal_contract} (see \cref{apx:proofs}) proceeds in two steps. First, rather than directly optimizing over all contracts, we fix a target accuracy $\gamma^*>1/2$ and ask: among all contracts that induce $\gamma^*$, which minimizes expected payments? 
This \emph{payment-minimization} problem yields the cheapest implementing contracts for each $\gamma^*$, after which the principal selects the $\gamma^*$ that maximizes net value $V(\gamma^*)-T(\gamma^*;r)$. 
This two-step approach separates $(i)$ how to purchase a given level of screening accuracy at least cost from $(ii)$ how much accuracy is worth buying.

The reduced problem highlights two distinct forces shaping the contract:
\begin{enumerate}
\item \textbf{Effort incentives (moral hazard).} To induce $\gamma^*$, expected pay must be sufficiently \emph{sensitive} to reporting the correct signal: $T'(\gamma^*;r)=c'(\gamma^*)$. This sensitivity is generated by rewarding aligned outcomes. Stronger alignment pay steepens incentives but increases expected transfers (information rents).
\item \textbf{Reporting incentives (truthfulness).} If reward was placed only on one side—e.g., bonuses for ``invest \& high return''—then the agent observing bad signal would still prefer to recommend investing. Preventing this requires a second payoff region on the opposite alignment (``pass \& low return''). This additional tier is the cheapest way to deter misreporting without overpaying across states.
\end{enumerate}

The optimal contract balances these forces by concentrating payments on a small number of aligned outcomes and setting all others to zero. The payment-minimization problem is a linear program, and its extreme-point structure ensures that strictly positive rewards need only be placed on a few outcome–report cells. Hence a three tier contract—no pay, bonus type 1, bonus type 2—is sufficient. Economically, this sparse design supplies the marginal incentive power needed to implement the desired accuracy while avoiding unnecessary payments.

\subsection{Symmetric Environments with MLRP}
\label{subsec:symmetric-setup}
We next consider symmetric environments where the prior probability of the high state is $p=\tfrac{1}{2}$ and the probability mass functions $f_l,f_h$ are symmetric around a center point. 
Specifically, there exists $\hat{\val}$ such that for any $\delta\in\mathbb{R}$, we have
\[
f_l(\hat{\val}-\delta) = f_h(\hat{\val}+\delta).
\]
We also assume that the return distributions satisfy the monotone likelihood ratio property (MLRP):
\begin{assumption}[MLRP]\label{asp:mlrp}
The distributions $F_l$ and $F_h$ satisfy MLRP; that is, the likelihood ratio $\tfrac{f_h(\val)}{f_l(\val)}$ is weakly increasing in $\val$.
\end{assumption}

Under MLRP, higher returns are more indicative of state $h$ and lower returns of state $l$. 
This allows us to simplify the optimal contract further.

\begin{definition}[Threshold contract]
A contract $S$ is a threshold contract if there exist $r^*\in\mathbb{R}_+$ and fixed thresholds $\underline{\val},\bar{\val}\in \mathbb{R}_+$ such that 
\[
S(\signal, \val)=
\begin{cases}
r^* & \text{ if } (\signal = \signal_h \ \text{ and } \ \val \geq \bar{\val}) \ \text{or} \ (\signal = \signal_l \ \text{ and } \ \val \leq \underline{\val}),\\
0 & \text{otherwise}.
\end{cases}
\]
\end{definition}
Intuitively, under a threshold contract, payments occur only when the agent recommends investing and the realized return is sufficiently high, or when the agent recommends passing and the realized return is sufficiently low.


\begin{theorem}[Threshold optimality]\label{thm:symmetric_mlrp}
Under \cref{asp:mlrp,asp:convex}, in symmetric environments there exists an optimal contract that is a threshold contract.
\end{theorem}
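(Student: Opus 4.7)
The plan is to work within the payment-minimization formulation used in the proof of \cref{thm:optimal_contract}: fix a target accuracy $\gamma^*>\tfrac12$, minimize the expected transfer $T(\gamma^*;r)$ subject to inducing $\gamma^*$ as the agent's best response, and optimize over $\gamma^*$ afterward. For fixed $\gamma^*$ this is a linear program in $r$: the truthful-reporting payment is linear in $r$, the effort first-order condition $T'(\gamma^*;r)=c'(\gamma^*)$ is a linear equality, and the three nontrivial reporting incentive constraints (truthful beating the swap, always-$\signal_l$, and always-$\signal_h$ strategies) are linear inequalities. To exploit symmetry I first symmetrize: given any feasible $r$, its reflection $\tilde r(\signal_h,\val)=r(\signal_l,2\hat\val-\val)$ and $\tilde r(\signal_l,\val)=r(\signal_h,2\hat\val-\val)$ is feasible with identical objective (because $f_l,f_h$ are symmetric and $p=\tfrac12$), so by linearity the average $(r+\tilde r)/2$ is an optimum that is invariant under reflection. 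Hence I may assume $r_l(\val)=r_h(2\hat\val-\val)$ throughout.

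Under this symmetry all IC constraints become redundant. Index returns so that $\val_i+\val_{n+1-i}=2\hat\val$ and set $x_i=r_h(\val_i)$, so $r_l(\val_i)=x_{n+1-i}$. Writing $S=\sum_i x_i[f_h(\val_i)-f_l(\val_i)]$, a direct computation shows the truthful payment exceeds the swap payment by $(2\gamma^*-1)S$ and exceeds each pooling payment by $(\gamma^*-\tfrac12)S$; the effort FOC forces $S=c'(\gamma^*)>0$, so every IC holds strictly. The inner LP therefore collapses to
\[
\min_{x\geq 0}\ \sum_i w_i x_i\quad\text{subject to}\quad\sum_i a_i x_i=c'(\gamma^*),
\]
with $w_i=\gamma^*f_h(\val_i)+(1-\gamma^*)f_l(\val_i)$ and $a_i=f_h(\val_i)-f_l(\val_i)$.

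MLRP then pins down the structure. On $\{i:a_i>0\}$ the cost ratio rewrites as $w_i/a_i=[\gamma^*L_i+(1-\gamma^*)]/(L_i-1)$ with $L_i=f_h(\val_i)/f_l(\val_i)$, and this is strictly decreasing in $L_i$. By \cref{asp:mlrp} the likelihood ratio $L_i$ is weakly increasing in $i$, so $w_i/a_i$ attains its minimum on an upper set $I=\{i_0,\ldots,n\}$. Setting $x_i=r^*=c'(\gamma^*)/\sum_{j\in I}a_j$ on $I$ and zero elsewhere lies in the LP's optimal face, and by the symmetry restriction this corresponds precisely to a threshold contract with $\bar\val=\val_{i_0}$ and $\underline\val=\val_{n+1-i_0}$. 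Since the construction is optimal for every target $\gamma^*$, the outer maximization over $\gamma^*$ inherits a threshold optimum.

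The main obstacle I anticipate is the second step: the algebraic verification that all three nontrivial reporting deviations differ from the truthful payment by a positive multiple of the single scalar $S$. Symmetry does essential work there—without it the IC inequalities are genuinely distinct linear constraints, which is precisely why \cref{thm:optimal_contract} delivers only the weaker three-tier conclusion (two positive reward levels). Once the LP is reduced to a single linear equality with nonnegativity, the MLRP-driven extreme-point argument is standard.
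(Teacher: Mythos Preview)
Your approach mirrors the paper's: reduce to payment minimization at a target $\gamma^*$, symmetrize the contract, collapse the program to a single linear equality, and use MLRP to show the optimal support is an extreme interval of returns. The only difference is bookkeeping (you parametrize by $x_i=r_h(\val_i)$ whereas the paper uses $x_i=r_{l,i}$), and the bang-for-buck ratio computation is the same up to that relabeling.

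There is, however, a gap in your redundancy claim for the IC constraints. The deviation that matters is not ``exert $\gamma^*$ and pool on $\signal_l$'' (same cost $c(\gamma^*)$ on both sides, so your payment comparison would suffice) but rather ``shirk to $\gamma=\tfrac12$ and report a constant signal,'' which incurs zero cost since $c(\tfrac12)=0$. The binding constraint is therefore $\hat T(\gamma^*;r)-c(\gamma^*)\ge R_l(r)$, i.e.\ in your notation $(\gamma^*-\tfrac12)S\ge c(\gamma^*)$, not merely $(\gamma^*-\tfrac12)S>0$. With $S=c'(\gamma^*)$ from the FOC, what you actually need is $(\gamma^*-\tfrac12)\,c'(\gamma^*)\ge c(\gamma^*)$, and this is precisely where \cref{asp:convex} enters: it is the secant--tangent inequality for a convex $c$ with $c(\tfrac12)=0$. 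Your sentence ``the effort FOC forces $S=c'(\gamma^*)>0$, so every IC holds strictly'' does not invoke convexity and so does not dispose of the shirking deviation. Once you add that one line, the argument is complete and coincides with the paper's.
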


Threshold contracts closely resemble hurdle-based performance evaluation in practice. For instance, the agent may be credited for recommending a deal only if $(i)$ the recommendation was to invest, and $(ii)$ the realized IRR exceeds a predefined hurdle (e.g., 20\%). Similarly, for conservative calls, the agent may be rewarded if declining an investment avoids underperformance. This dual alignment—rewarding both correct enthusiasm and correct caution—is precisely what the threshold contract captures.

In symmetric environments with MLRP, \cref{thm:symmetric_mlrp} shows that the cheapest way to buy accuracy is to pay only when the realized outcome is most \emph{diagnostic} of the state and matches the recommendation. Since the likelihood ratio is monotone, upper-tail outcomes are the strongest evidence for the high state and lower-tail outcomes for the low state. Paying only in these tails maximizes incentive power per dollar spent: each rewarded observation provides the greatest increase in expected payoff conditional on truthful reporting. Symmetry then aligns the thresholds and allows the principal to use a single bonus across both sides, preserving truthfulness without unnecessary transfers.

This result has practical appeal. Threshold contracts are transparent, easy to implement, and reduce the risk of discouraging bold or contrarian recommendations. The symmetry assumption reflects balanced ex ante beliefs about success and failure, while MLRP captures the idea that stronger signals are more likely in better states. 

A key nuance is that raising rewards does not always improve the principal's payoff: while higher accuracy avoids mistakes, it also increases expected payments. Hence the optimal contract balances the marginal value of accuracy against its cost. This insight is particularly relevant in PE/VC, where carry-based compensation and promotion tracks must be calibrated carefully.

When symmetry fails, optimal contracts may take an asymmetric threshold form. When MLRP fails, optimal contracts can require genuinely non threshold, possibly non monotonic structures (Appendix~\ref{subapx:non-threshold}). Our general result (Theorem~\ref{thm:optimal_contract}) continues to apply, but the implied payment schedule need not be a simple threshold rule. In such cases, firms may favor simpler designs for implementability, even at some cost to optimality. Empirical work could examine how compensation structures vary across settings with different degrees of informational asymmetry and signal regularity.

\section{Numerical Example}
\label{sec:numerical}
To illustrate the performance gains from our contract structure, we compare it with a simple linear contract. Under a linear scheme, the agent's compensation is a fixed share of realized returns, independent of her recommendations. Such contracts are common benchmarks in practice because they are easy to compute and administer.

We construct a stylized example with two states ($l,h$), equal priors ($p=1/2$), and finite support returns ($v\in\{0,1,2\}$). The return distributions $F_l$ and $F_h$ take the following form. 
\begin{align*}
F_l(v) = \begin{cases}
\frac{3}{5} & v=0 \\
\frac{1}{5} & v=1 \\
\frac{1}{5} & v=2, 
\end{cases}
\qquad
F_h(v) = \begin{cases}
\frac{1}{5} & v=0 \\
\frac{1}{5} & v=1 \\
\frac{3}{5} & v=2. 
\end{cases}
\end{align*}

In the low state $l$, the distribution $F_l$ has a mean $\mu_l=0.6<1$, while in the high state $h$, the distribution $F_h$ has a mean $\mu_h=1.4>1$. The agent chooses signal accuracy $\gamma\in[0.5,1]$ at convex cost $c(\gamma)=k(\gamma-0.5)^2$. The principal invests if the posterior expected return exceeds~1.

\paragraph{Contracts.}  
$(i)$ \emph{Linear share}: $S(v)=\alpha\cdot v$, independent of the agent's report.  
$(ii)$ \emph{Threshold}: pay $r>0$ only when the report and realized return are aligned and extreme, i.e., ``invest \& high return" (report $h$ and $v=2$) or ``pass \& low return" (report $l$ and $v=0$).  

\paragraph{Results.}  
Optimizing contract parameters for $k=\frac{1}{15}$ yields the following comparison:

\begin{table}[ht]
\caption{Comparison of optimized linear vs.\ threshold contracts ($k=\frac{1}{15}$).}
\label{tab:contract-comparison}
\centering
\small
\begin{tabular}{lccccc}
\toprule
Contract & Parameters & $\gamma^*$ & Principal value $V(\gamma^*)$ & Payment $T(\gamma^*)$ & Net payoff $V-T$ \\
\midrule
Linear share & $\alpha=\frac{1}{12}$ & 0.75 & 1.1 & 0.092 & 1.008 \\
Threshold & $r=\frac{1}{6}$ & 1.00 & 1.2 & 0.1 & 1.1 \\
\bottomrule
\end{tabular}
\end{table}

\begin{figure}[t]
\centering
\begin{tikzpicture}
\begin{axis}[
  width=13cm, height=7.5cm,
  xlabel={$k$}, ylabel={Investment Return},
  xmin=0, xmax=0.25,
  ymin=1, ymax=1.21,
  xtick={0,0.1,0.2,0.25},
  ytick={1,1.05,1.10,1.15,1.20},
  grid=major,
  legend style={draw=none, fill=none, font=\small},
  legend pos=north east,
  unbounded coords=jump,
  domain=0:0.25,
]

\pgfmathsetmacro{\kPone}{0.1}
\pgfmathsetmacro{\kPtwo}{0.2}
\pgfmathsetmacro{\kQone}{2/35} 
\pgfmathsetmacro{\kQtwo}{2/25} 

\addplot[very thick, blue, forget plot] expression[domain=0:\kPone] {1.2 - 1.5*x};
\addplot[very thick, blue, forget plot] expression[domain=\kPone:\kPtwo, samples=200] {0.8 + 0.5*x + 0.02/x};
\addplot[very thick, blue, forget plot] expression[domain=\kPtwo:0.25] {1};

\addplot[very thick, red, dotted, forget plot] expression[domain=0:\kQone] {1.2 - 3*x};
\addplot[very thick, red, dotted, forget plot] expression[domain=\kQone:\kQtwo, samples=200] {0.5 + 0.02/x + (25.0/8.0)*x};
\addplot[very thick, red, dotted, forget plot] expression[domain=\kQtwo:0.25] {1};

\addlegendimage{line legend, red, dotted, very thick}
\addlegendentry{Linear Contracts}
\addlegendimage{line legend, blue, very thick}
\addlegendentry{Threshold Contracts}

\addplot[densely dotted, gray, forget plot] coordinates {(\kQone,1) (\kQone,1.21)};
\addplot[densely dotted, gray, forget plot] coordinates {(\kQtwo,1) (\kQtwo,1.21)};
\addplot[densely dotted, gray, forget plot] coordinates {(\kPone,1) (\kPone,1.21)};
\addplot[densely dotted, gray, forget plot] coordinates {(\kPtwo,1) (\kPtwo,1.21)};

\end{axis}
\end{tikzpicture}
\caption{The expected payoff from optimal investments as a function of $k$.}
    \label{fig:example}
\end{figure}
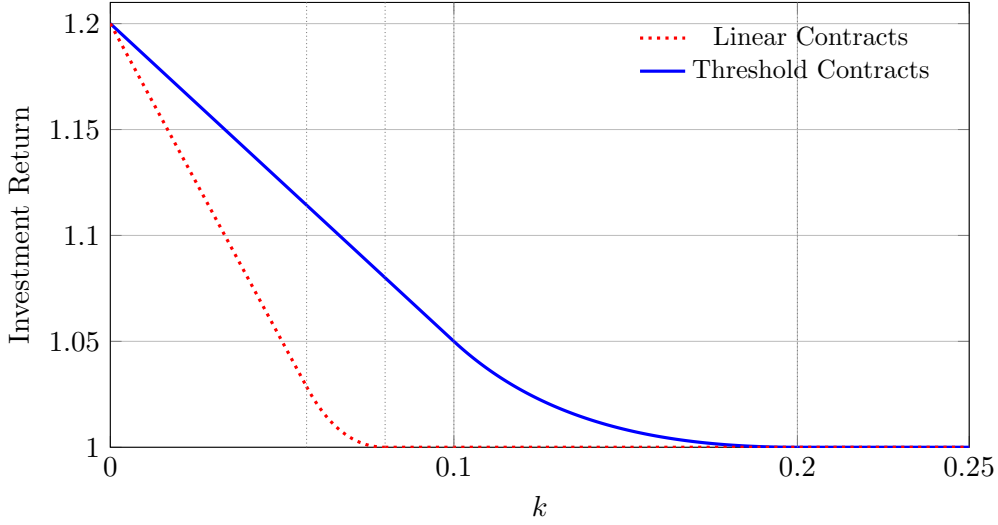

The threshold contract achieves higher induced accuracy ($\gamma^*=1.00$ vs.\ $0.75$) and raises the principal's net payoff by roughly 9\% (1.1 vs.\ 1.008).  

\paragraph{Interpretation.}  
The linear contract spreads transfers across all outcomes, including those that are only weakly informative, which is an expensive way to buy effort. The threshold contract concentrates payments on tail events that are most diagnostic of the true state and aligned with the agent's recommendation, which maximizes incentive power per dollar.  

The previous example serves as an illustration of the optimal contract and its comparison with linear contracts. 
In this example, the multiplicative gap between linear contracts and the optimal is roughly $1.1$. We show that in any environment, the gap is bounded, which is at most $2$.
The proof of \cref{prop:linear_gap} is provided in \cref{subapx:linear_gap}
\begin{proposition}\label{prop:linear_gap}
The multiplicative revenue gap between optimal contracts and linear contracts is at most $2$. 
\end{proposition}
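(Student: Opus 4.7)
The plan is to sandwich the principal's optimal revenue between two easily computed quantities that involve no contract design at all: an upper bound coming from the first-best full-information gross value, and a lower bound attained by the degenerate linear contract with slope zero. If the ratio of these two envelopes is at most $2$, the proposition follows since the optimal linear contract is weakly better than the degenerate one.

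For the upper bound, for any contract $r$ I would write $\rev(r) = V(\gamma^*(r)) - T(\gamma^*(r);r) \leq V(\gamma^*(r)) \leq V(1)$. The first inequality uses $T \geq 0$, and the second follows because $V(\gamma)$ is monotone in the signal precision $\gamma$: a more accurate signal weakly improves the principal's binary investment decisions, a standard Blackwell-type comparison. Explicitly, $V(1) = (1-p) + p\mu_h$, since with perfect information the principal avoids the risky option exactly in state $l$ and collects $\mu_h$ in state $h$.

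For the lower bound, I would point to the degenerate linear contract $S \equiv 0$. Under it the agent bears no cost and has no incentive to acquire information, so $\gamma = 1/2$ and reports are uninformative; the principal then ignores the report and invests based solely on the prior, obtaining $\max\{1, p\mu_h + (1-p)\mu_l\}$. Substituting $\gamma = 1/2$ into the definition of $V$ confirms this quantity equals $V(1/2)$. Hence the optimal linear contract earns at least $V(1/2)$.

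The main obstacle is the ratio bound $V(1) \leq 2\,V(1/2)$, which is where the whole argument lives. I would split on whether the prior mean $p\mu_h + (1-p)\mu_l$ lies below or above $1$. In the former case $V(1/2) = 1$, and $V(1) = (1-p) + p\mu_h \leq 2$ because $p\mu_h \leq 1 - (1-p)\mu_l \leq 1$. In the latter case $V(1/2) = p\mu_h + (1-p)\mu_l$, and the inequality rearranges to $1 - p \leq p\mu_h + 2(1-p)\mu_l$, which follows immediately from $p\mu_h \geq 1 - (1-p)\mu_l$. Chaining the three bounds gives $\rev(r) \leq V(1) \leq 2\,V(1/2) \leq 2 \cdot (\text{linear revenue})$, completing the argument. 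The factor of $2$ appears essentially tight in the limit $\mu_l \to 0$ with $p\mu_h$ just below~$1$, so this sandwich approach cannot yield a smaller constant without exploiting more structure than $\mu_l < 1 < \mu_h$.
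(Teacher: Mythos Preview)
Your proof is correct and follows essentially the same approach as the paper: upper-bound the optimal revenue by the full-information gross value $V(1)=(1-p)+p\mu_h$, lower-bound the linear revenue by the zero contract's payoff $V(1/2)=\max\{1,\,p\mu_h+(1-p)\mu_l\}$, and then split on whether the prior mean exceeds $1$ to bound the ratio by $2$. The only cosmetic difference is that the paper phrases the $p\mu_h+(1-p)\mu_l>1$ case via the additive gap $(1-p)(1-\mu_l)\le 1$ before converting to a ratio, whereas you rearrange the ratio inequality directly; the underlying arithmetic is the same.
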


\section{Conclusion}
This paper develops theoretical foundations and practical guidance for incentive design in high-stakes investment settings such as private equity (PE) and venture capital (VC) firms. Our results show that relatively simple performance-contingent contracts—especially those with a three tier reward structure—can align junior professionals' incentives with firm-level investment goals.

The theoretical justification for these contracts is strongest under symmetric beliefs and monotone signal informativeness. In such environments, threshold contracts not only achieve incentive alignment but are also straightforward to implement and communicate. More generally, even without symmetry or MLRP, three tier contracts remain sufficient for optimality, balancing informativeness with contract simplicity.


Taken together, the analysis provides a tractable benchmark for contract design in investment organizations where decisions are delegated under uncertainty and professionals hold private, costly-to-acquire information. The framework also yields testable predictions: firms operating in more symmetric deal environments (e.g., mid-stage venture capital) are expected to adopt simpler hurdle-based contracts, whereas firms facing highly idiosyncratic or distressed opportunities may rely on more complex or asymmetric designs. Empirical work comparing contract forms across sectors could provide direct evidence for these predictions and inform the evolution of compensation practices.

\paragraph{Managerial implications.}
For practitioners, the results suggest that firms need not design overly complex incentive schemes to achieve alignment. A small number of transparent reward tiers can be sufficient to motivate diligence and truthful reporting, while also being easier to communicate internally and to enforce. Threshold-based schemes, in particular, provide clear performance benchmarks that resonate with both analysts and senior partners. By calibrating these thresholds to the informational environment, firms can balance the value of better screening against the costs of additional compensation, thereby improving both decision quality and organizational cohesion.

\newpage

\bibliographystyle{apalike}
\bibliography{ref}
\newpage

\appendix
\section{Proofs of Optimal Contracts}
\label{apx:proofs}

\begin{proof}[Proof of \cref{thm:optimal_contract}]
Recall that given accuracy $\gamma$, the expected payment to the agent under contract $r$ is  
\begin{align*}
T(\gamma;r) &= \max\{(1-p)(\gamma f_l\cdot r_l + (1-\gamma)f_l \cdot r_h ) + p\cdot (\gamma f_h\cdot r_h + (1-\gamma)f_h \cdot r_l ), \\
&\qquad \qquad (1-p)(\gamma f_l\cdot r_h + (1-\gamma)f_l \cdot r_l ) + p\cdot (\gamma f_h\cdot r_l + (1-\gamma)f_h \cdot r_h ), \\
&\qquad \qquad (1-p)f_l \cdot r_l + p f_h\cdot r_l, 
\quad (1-p)f_l \cdot r_h + p f_h\cdot r_h\}.
\end{align*}
Note that it is without loss to consider the case where 
$f_l\cdot r_l \geq f_l \cdot r_h$
and $f_h\cdot r_h \geq f_h \cdot r_l$.
This is because otherwise either there is a dominant strategy for reporting the signals, which provides no incentives for the agent to exert effort, 
or we can swap the definition of $r_l$ and $r_h$ such that the above inequalities hold. 
In this case, the second case in $T(\gamma;r)$ can never be the maximum given any $\gamma$. Thus, the expected payment simplifies to 
\begin{align*}
T(\gamma;r) &= \max\{(1-p)(\gamma f_l\cdot r_l + (1-\gamma)f_l \cdot r_h ) + p\cdot (\gamma f_h\cdot r_h + (1-\gamma)f_h \cdot r_l ), \\
&\qquad \qquad (1-p)f_l \cdot r_l + p f_h\cdot r_l, 
\quad (1-p)f_l \cdot r_h + p f_h\cdot r_h\}.
\end{align*} 
Note that only the first case in $T(\gamma;r)$ depends on $\gamma$, and is linear in $\gamma$. 
Therefore, given any $r$, $T(\gamma;r)$ is a piecewise linear function in $\gamma$. 
Since $c(\gamma)$ is convex in $\gamma$, 
the utility maximizing accuracy choice for the agent is either $\gamma = \frac{1}{2}$, which has the lowest cost of $0$, 
or chooses a $\gamma$ that is high enough such that 
$$T(\gamma;r) = \hat{T}(\gamma;r) \triangleq (1-p)(\gamma f_l\cdot r_l + (1-\gamma)f_l \cdot r_h ) + p\cdot (\gamma f_h\cdot r_h + (1-\gamma)f_h \cdot r_l ).$$
In the former case, the principal makes the decision with only ex ante information without compensating the agent. This can be implemented as a zero contract, which is trivially a three tier contract.

In the latter case, the utility of the agent is $U(\gamma;r)=\hat{T}(\gamma;r) - c(\gamma)$, which is concave in $\gamma$ since $\hat{T}$ is linear in $\gamma$ and $c$ is convex in $\gamma$.
Therefore, the optimal choice of the agent in this case is determined by the first-order condition. 
That is, 
\begin{align*}
c'(\gamma) = (1-p)(f_l\cdot r_l -f_l \cdot r_h ) + p\cdot (f_h\cdot r_h - f_h \cdot r_l ).
\end{align*}
Suppose the revenue-optimal contract is $r^*$. 
Let 
\begin{align*}
d^* = (1-p)(f_l\cdot r^*_l -f_l \cdot r^*_h ) + p\cdot (f_h\cdot r^*_h - f_h \cdot r^*_l )
\end{align*}
and $\gamma^*$ be the accuracy such that $c'(\gamma^*) = d^*$. 
The optimal contract design can be reduced to designing a contract that incentivizes the agent to choose accuracy $\gamma^*$ while minimizing the expected payment. 
\begin{align}
\min_{r\geq 0}\quad& \hat{T}(\gamma^*;r) \tag{OPT}\label{eq:opt_contract_simplified}\\
\text{s.t.}\quad& (1-p)(f_l\cdot r_l -f_l \cdot r_h ) + p\cdot (f_h\cdot r_h - f_h \cdot r_l ) = d^* \tag{MH}\label{eq:MH}\\
& \hat{T}(\gamma^*;r) - c(\gamma^*)\geq (1-p)f_l \cdot r_l + p f_h\cdot r_l \tag{IC1}\label{eq:IC1}\\
&\hat{T}(\gamma^*;r) - c(\gamma^*)\geq (1-p)f_l \cdot r_h + p f_h\cdot r_h \tag{IC2}\label{eq:IC2}
\end{align}
A few comments on the constraints of the optimization program. 
\cref{eq:MH} ensures that the agent's local best response for accuracy is $\gamma^*$. 
Moreover, \cref{eq:IC1} and \eqref{eq:IC2} ensure that the agent prefers accuracy $\gamma^*$ and truthfully reporting the signals rather than shirking without effort. 
These two constraints together also imply that the agent has a weak incentive to report truthfully given accuracy $\gamma^*$, 
i.e., $T(\gamma^*;r) = \hat{T}(\gamma^*;r)$. 

\paragraph{Linear Programming Analysis:} Note that \eqref{eq:opt_contract_simplified} is a finite-dimensional LP. To determine the structure of the optimal solution, we convert it into standard equality form by introducing non-negative slack variables $s_1 \ge 0$ and $s_2 \ge 0$ for \eqref{eq:IC1} and \eqref{eq:IC2}, respectively. The constraints become:
\begin{align}
(1-p)(f_l\cdot r_l -f_l \cdot r_h ) + p\cdot (f_h\cdot r_h - f_h \cdot r_l ) &= d^*  \tag{MH} \\
\hat{T}(\gamma^{*};r) - ((1-p)f_l \cdot r_l + p f_h\cdot r_l) - s_1 &= c(\gamma^{*}) \tag{IC1'} \label{eq:ic1p}\\
\hat{T}(\gamma^{*};r) - ((1-p)f_l \cdot r_h + p f_h\cdot r_h) - s_2 &= c(\gamma^{*}) \tag{IC2'} \label{eq:ic2p}
\end{align}
This system has $M=3$ equality constraints. The total number of variables is $K=2n+2$ (the $2n$ payment variables $r_{\sigma,i}$ and the 2 slack variables $s_1, s_2$).

We invoke the Fundamental Theorem of Linear Programming, which states that if an LP has an optimal solution, then there exists an optimal solution that is a Basic Feasible Solution (BFS). A key property of a BFS in a system with $M$ equality constraints is that at most $M$ variables are non-zero.
Let $N_r$ be the number of positive payment variables in the optimal BFS, and $N_s$ be the number of positive slack variables ($N_s \in \{0, 1, 2\}$). We must have:
$$N_r + N_s \le 3.$$
We analyze the structure of the optimal contract by considering the possible values of $N_s$.

\paragraph{Case 1: At least one IC constraint is slack ($N_s \ge 1$).}
If $N_s \ge 1$ (i.e., $s_1 > 0$ or $s_2 > 0$), the number of positive payments is bounded by:
$$N_r \le 3 - N_s \le 3 - 1 = 2.$$
The optimal contract uses at most two positive payments.

\paragraph{Case 2: Both IC constraints are binding ($N_s = 0$).}
Let $R_l(r) = (1-p)f_l \cdot r_l + p f_h\cdot r_l$ and $R_h(r) = (1-p)f_l \cdot r_h + p f_h\cdot r_h$ be the expected payments from shirking. 
Let $d(r)$ be the slope of $\hat{T}(\gamma;r)$ with respect to $\gamma$:
\begin{align*}
d(r) \triangleq (1-p)(f_l\cdot r_l -f_l \cdot r_h ) + p\cdot (f_h\cdot r_h - f_h \cdot r_l ).
\end{align*}

If $N_s = 0$ ($s_1 = 0$ and $s_2 = 0$), both \eqref{eq:IC1} and \eqref{eq:IC2} hold with equality. 
This implies $R_l(r) = R_h(r)$. We examine the implications and consistency of these binding constraints.



Due to the linearity of $\hat{T}(\gamma;r)$ in $\gamma$, we can decompose it as:
$$
\hat{T}(\gamma;r) = \hat{T}(\frac{1}{2};r) + (\gamma - \frac{1}{2}) \cdot d(r).
$$
Furthermore, $\hat{T}(\frac{1}{2};r) = \frac{1}{2} (R_l(r) + R_h(r))$.
In Case 2, since $R_l(r) = R_h(r)$, we have $\hat{T}(\frac{1}{2};r) = R_l(r)$.
Substituting this and the (MH) constraint $d(r)=d^*$ into the decomposition at $\gamma^*$:
\begin{align}\label{eq:T_identity}
\hat{T}(\gamma^*;r) = R_l(r) + (\gamma^* - \frac{1}{2}) \cdot d^*.
\end{align}
This reveals that if a contract $r$ satisfies $R_l(r)=R_h(r)$ and $d(r)=d^*$, the relationship between $\hat{T}(\gamma^*;r)$ and $R_l(r)$ is fixed:
\begin{align}\label{eq:IC_fixed_value}
\hat{T}(\gamma^*;r) - R_l(r) = (\gamma^* - \frac{1}{2}) \cdot d^*.
\end{align}

However, Case 2 also requires that \eqref{eq:IC1} holds with equality:
\begin{align}\label{eq:IC1_binding}
\hat{T}(\gamma^*;r) - R_l(r) = c(\gamma^*).
\end{align}
Comparing \eqref{eq:IC_fixed_value} and \eqref{eq:IC1_binding}, 
we see that a feasible solution for Case 2 exists if and only if
\begin{align}\label{eq:consistency}
(\gamma^* - 1/2) \cdot d^* = c(\gamma^*).
\end{align}
Since $c$ is convex with $c(1/2)=0$, we have 
$c'(\gamma^*) \ge \frac{c(\gamma^*)}{\gamma^*-1/2}$, 
so $(\gamma^* - 1/2)d^* \ge c(\gamma^*)$.

\subparagraph{Case 2a: Consistency condition holds.}
If $(\gamma^* - 1/2) d^* = c(\gamma^*)$ (i.e., the cost function is linear up to $\gamma^*$), Case 2 is feasible. Crucially, in this scenario, the constraints \eqref{eq:ic1p} and \eqref{eq:ic2p} are redundant. They are automatically satisfied by any contract $r$ that satisfies $d(r)=d^*$ and $R_l(r)=R_h(r)$, as demonstrated by \eqref{eq:IC_fixed_value} and \eqref{eq:consistency}.

We can thus formulate a reduced optimization problem using only the essential constraints. The objective $\hat{T}(\gamma^*;r)$ can be rewritten using \eqref{eq:T_identity} as $R_l(r) + (\gamma^* - 1/2) d^*$. Since the second term is constant, minimizing $\hat{T}(\gamma^*;r)$ is equivalent to minimizing $R_l(r)$.

The constraints $d(r)=d^*$ and $R_l(r)=R_h(r)$ can be combined algebraically. 
Define the expected payment gaps
\[
\Delta_l \triangleq f_l \cdot (r_l - r_h)
\quad\text{and}\quad
\Delta_h \triangleq f_h \cdot (r_h - r_l).
\]
Then $R_l(r)=R_h(r)$ is equivalent to $(1-p)\Delta_l = p\Delta_h$.
The moral hazard constraint is $(1-p)\Delta_l + p\Delta_h = d^*$.
Combining these yields:
\begin{align*}
(1-p)\Delta_l = d^*/2 \quad \text{and} \quad p \Delta_h = d^*/2.
\end{align*}

The optimization problem in Case 2a reduces to (OPT-Bind):
\begin{align}
\min_{r_l\geq 0, r_h\geq 0}\quad& R_l(r) \tag{OPT-Bind}\\
\text{s.t.}\quad& (1-p)(f_l\cdot r_l - f_l\cdot r_h) = d^*/2 \notag \\
& p(f_h\cdot r_h - f_h\cdot r_l) = d^*/2 \notag
\end{align}
This LP has 2 equality constraints. The rank of the constraint matrix is at most 2. By the Fundamental Theorem of LP, there exists an optimal BFS with at most 2 positive variables. Thus, $N_r \leq 2$.

\subparagraph{Case 2b: Consistency condition fails.}
If the cost function is strictly convex such that $(\gamma^* - \frac{1}{2}) d^* > c(\gamma^*)$, the consistency condition \eqref{eq:consistency} is violated. The requirements of Case 2 (that \eqref{eq:MH}, \eqref{eq:ic1p}, and \eqref{eq:ic2p} all hold with $s_1=s_2=0$) are mathematically inconsistent. The feasible set for Case 2 is empty. The optimal solution must therefore fall into Case 1.

\medskip
Finally, combining all cases, there exists an optimal solution $(r^*, s_1^*, s_2^*)$ such that 
the total number of strictly positive payments made by the contract $r^*$ across all outcomes and reports is at most 2. That is,
$$
\left| \{ (\sigma, i) : r_{\sigma,i}^* > 0 \} \right| \le 2.
$$
Let $V^+$ be the set of distinct positive values utilized by the optimal contract $r^*$. Since there are at most 2 positive payments in total, the number of distinct positive values must also be at most~2 (i.e., $|V^+| \le 2$).
Therefore, the set of all payment values used by the contract is $V^+ \cup \{0\}$. This aligns with \cref{def:3-tier contract} of a three tier contract (which allows for at most three distinct values, including zero).

Finally, since the cost-minimizing implementation for any $\gamma^*$ is a three tier contract, the overall optimal contract for the principal must also be a three tier contract. 
\end{proof}

\begin{proof}[Proof of \cref{thm:symmetric_mlrp}]
Similar to the proof of \cref{thm:optimal_contract}, we analyze the payment minimization problem \eqref{eq:opt_contract_simplified} for implementing a target accuracy $\gamma^* > \frac{1}{2}$. (The case $\gamma^*=\frac{1}{2}$ is trivial, implemented by the zero contract). Let $d^* \triangleq c'(\gamma^*) > 0$.

\paragraph{Simplification of Optimization Program.}
We first show that it is without loss of generality to restrict attention to symmetric contracts. 
A contract $r$ is symmetric if $r_{l,i} = r_{h, n+1-i}$ for all $i$. If $r$ is an optimal contract, the symmetrized contract $r^S$, defined by $r^S_{l,i} = r^S_{h, n+1-i} = \frac{1}{2}(r_{l,i} + r_{h, n+1-i})$, is also optimal due to the linearity of \eqref{eq:opt_contract_simplified} and the symmetry of the environment.

For any symmetric contracts, let $x_i = r_{l,i}$. Then $r_{h,i} = x_{n+1-i}$. The optimization problem can then be formulated in terms of $n$ variables $x_i \ge 0$.
Moreover, under a symmetric contract in a symmetric environment, $f_l\cdot r_l = f_h\cdot r_h$ and $f_l\cdot r_h = f_h\cdot r_l$. Consequently, the constraints \eqref{eq:IC1} and \eqref{eq:IC2} become identical (denoted as (IC-S)). 
Moreover, the symmetry implies that the moral hazard constraint can be simplified to 
\begin{align}\label{eq:mhs}
(f_l - f_h)\cdot x = d^*. \tag{MH-S}
\end{align}
The objective function also simplifies, with $\hat{T}(\gamma^*; r)$ becoming:
\begin{align*}
\hat{T}(\gamma^*; x) &= (\gamma^* f_l + (1-\gamma^*) f_h) \cdot x,
\end{align*}
where recall that the prior $p=\frac{1}{2}$ in symmetric environments.

Now we re-examine the incentive constraint in symmetric environments. 
Note that the simplified (IC-S) can be represented as 
\begin{align*}
\hat{T}(\gamma^*;x) - \frac{1}{2}(f_l + f_h)\cdot x \geq c(\gamma^*).
\end{align*}
For the left-hand side, we have 
\begin{align*}
\hat{T}(\gamma^*;x) - \frac{1}{2}(f_l + f_h)\cdot x 
= (\gamma^* f_l + (1-\gamma^*) f_h) \cdot x
- \frac{1}{2}(f_l + f_h)\cdot x 
= \rbr{\gamma^*-\frac{1}{2}} \cdot \rbr{f_l-f_h}\cdot x.
\end{align*}
By \eqref{eq:mhs}, (IC-S) becomes $(\gamma^* - \frac{1}{2}) d^* \ge c(\gamma^*)$.
Since $d^* = c'(\gamma^*)$, this is equivalent to $\frac{c(\gamma^*) - c(\frac{1}{2})}{\gamma^* - \frac{1}{2}} \le c'(\gamma^*)$ (as $c(\frac{1}{2})=0$). This inequality always holds by the convexity of $c(\gamma)$ (\cref{asp:convex}).
Therefore, the constraint (IC-S) is also redundant given \eqref{eq:mhs}.

Given the simplifications, the optimization program in symmetric environments reduces to 
\begin{align}
\min_{x\geq 0}\quad& (\gamma^* f_l + (1-\gamma^*) f_h) \cdot x \tag{OPT-S}\label{eq:opt_contract_symmetric} \\
\text{s.t.}\quad& (f_l - f_h)\cdot x = d^*. \tag{MH-S}
\end{align}

\paragraph{The Shape of the Optimal Contract under MLRP.}
The optimization problem reduces to a simple LP. To minimize the cost, we utilize the variables $x_i$ that maximize the ``bang for the buck,'' defined by the ratio of the constraint coefficient (Incentive) to the cost coefficient:
$$
R_i = \frac{f_{l,i} - f_{h,i}}{\gamma^* f_{l,i} + (1-\gamma^*) f_{h,i}}.
$$
We only consider indices where $f_{l,i} > f_{h,i}$ since $d^*>0$. Let $L_i = f_{h,i}/f_{l,i}$ be the likelihood ratio ($L_i < 1$).
$$
R_i = \frac{1 - L_i}{\gamma^* + (1-\gamma^*) L_i}.
$$
Consider the function $g(L) = \frac{1-L}{\gamma^* + (1-\gamma^*)L}$. Its derivative is $g'(L) = -1/(\gamma^* + (1-\gamma^*)L)^2 < 0$. Thus, $g(L)$ is strictly decreasing in $L$.

By the MLRP Assumption (\cref{asp:mlrp}), $L_i$ is weakly increasing in $i$. Therefore, $R_i$ is weakly decreasing in~$i$. To maximize efficiency, we must prioritize the smallest indices (lowest returns).
Let $R_{\max} = \max_i R_i$. Let $K$ be the set of indices that maximize $R_i$. Since $R_i$ is weakly decreasing, $K$ must be an initial segment $K=\{1, 2, \dots, k\}$ for some $k \ge 1$. Any optimal solution $x^*$ must have $x^*_i = 0$ for $i \notin K$.

Now we construct an optimal threshold contract $x^T$. Define $x^T_i = r^*$ if $i \in K$ (i.e., $i \le k$) and $x^T_i=0$ otherwise. We select $r^*$ such that $x^T$ satisfies (MH-S):
$$
r^* \sum_{i=1}^k (f_{l,i} - f_{h,i}) = d^* \implies r^* = \frac{d^*}{\sum_{i=1}^k (f_{l,i} - f_{h,i})}.
$$
We verify the cost of $x^T$. The minimum cost of the LP is $C^* = d^*/R_{\max}$. For $i \in K$, the cost coefficient $C_i$ satisfies $C_i = (f_{l,i} - f_{h,i})/R_{\max}$.
\begin{align*}
C^T = \sum_{i=1}^k C_i r^* = \frac{r^*}{R_{\max}} \sum_{i=1}^k (f_{l,i} - f_{h,i}) = \frac{d^*}{R_{\max}} = C^*.
\end{align*}
Therefore, the threshold contract $x^T$ is optimal for \eqref{eq:opt_contract_symmetric}.

Finally, the constructed contract pays $r_{l,i}=r^*$ if $v_i \le \underline{v}=v_k$. 
By symmetry, $r_{h,i} = x_{n+1-i}$, so it pays $r_{h,i}=r^*$ if $v_i \ge \bar{v}=v_{n+1-k}$. 
This matches the definition of a threshold contract with thresholds $(\underline v,\bar v)$.

\end{proof}

\section{Additional Missing Proofs}
\label{apx:additional_proofs}
\subsection{Worst Case Gap of Linear Contracts}
\label{subapx:linear_gap}
\begin{proof}[Proof of \cref{prop:linear_gap}]
We divide the analysis into two cases. 
\begin{itemize}
\item $p\mu_h+(1-p)\mu_l > 1$. In this case, under linear contracts, the principal can at least guarantee a payoff of $p\mu_h+(1-p)\mu_l$ by offering a linear contract of $\alpha=0$ and investing in the risky option. 
Moreover, the payoff under the optimal contract is upper bounded by the welfare with full information, which is 
$p\mu_h+(1-p)$. 
The difference is at most $(1-p)(1-\mu_l)\leq 1$, 
which implies that the multiplicative gap is at most 
$1+\frac{1}{p\mu_h+(1-p)\mu_l} < 2$.

\item $p\mu_h+(1-p)\mu_l \leq 1$. In this case, under linear contracts, the principal's payoff is at least~$1$. 
Again, the payoff under the optimal contract is upper bounded by $p\mu_h+(1-p)$. 
The multiplicative gap is at most 
$p\mu_h+(1-p)\leq p\mu_h+1\leq 2$.
\end{itemize}
Combining both cases, \cref{prop:linear_gap} holds.
\end{proof}

\subsection{Optimality of Non Threshold Contracts}
\label{subapx:non-threshold}
In this section, we provide a simple example showing that, without the MLRP assumption, the optimal contract may not take the form of thresholds. 

Specifically, consider a symmetric environment where the realized return can take 5 possible values $(0,\frac{1}{2},1,\frac{3}{2},2)$. 
The probabilities of those 5 values are $(\frac{1}{8},\frac{3}{8},\frac{1}{4},\frac{1}{8},\frac{1}{8})$ given the low state $l$, and $(\frac{1}{8},\frac{1}{8},\frac{1}{4},\frac{3}{8},\frac{1}{8})$ given the high state $h$.
In this simple example, the MLRP assumption is violated. Specifically, the likelihood ratio $\frac{f_h}{f_l}$ is maximized (minimized) at returns $\frac{3}{2}$ ($\frac{1}{2}$). 
To maximize the principal's payoff, or equivalently to minimize expected payments subject to inducing the desired effort level and truthful reporting, the optimal contract places positive payments only on two report return cells, $(\sigma_h, v=3/2)$ and $(\sigma_l, v=1/2)$, and sets all other payments to zero. This is not representable by a threshold contract.

\subsection{Partially Observable Returns}
\label{subapx:observe}
Our paper focuses on the model where the realized return of the risky investment is observable regardless of the investment decision. This is plausible in many applications. On the other hand, there also exist applications where the return of the risky investment cannot be observed if the principal chooses the safe investment. We refer to such environments as partially observable environments. In this section, we extend our results to these environments by showing that our main characterizations generalize with negligible loss. The key idea is to occasionally choose the risky investment for observability and scale payments in those events, which preserves incentives in expectation. A caveat is that this construction scales realized bonuses by a factor $1/\epsilon$, so it relies on there being no tight cap on realized bonuses (or on caps being sufficiently large).


Let $\opt$ be the optimal net return when the realized returns are always observable. 
\begin{proposition}\label{prop:partial}
In partially observable environments, for any $\epsilon>0$, there exists a three tier contract such that the expected net return is at least $\opt-\epsilon$. 
\end{proposition}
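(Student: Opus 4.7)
The plan is to take the optimal three tier contract $r^*$ from \cref{thm:optimal_contract} (achieving net return $\opt$ in the fully observable setting) and perturb it by occasionally overriding the agent's recommendation to pass, scaling the payment so that expected transfers (and hence incentives) are preserved while only a negligible amount of expected return is lost. Concretely, fix a parameter $\delta \in (0,1)$ to be chosen later and define the following mechanism: on report $\signal_h$, invest in $I_1$ and pay $r^*(\signal_h,\val)$ upon observing $\val$; on report $\signal_l$, with probability $1-\delta$ follow the recommendation and pass (paying zero), and with probability $\delta$ override and invest in $I_1$, paying $r^*(\signal_l,\val)/\delta$ upon observing $\val$. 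The overriding is concentrated on the $\signal_l$ side precisely because that is where the original contract would otherwise condition on an unobservable realization.

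Next I would argue that the agent's optimization is unchanged. Since $\delta\cdot(r^*(\signal_l,\val)/\delta)+(1-\delta)\cdot 0 = r^*(\signal_l,\val)$, the expected payment conditional on the underlying state and the reported signal matches exactly what it would be under $r^*$. Hence $T(\gamma;\cdot)$ and the local slope $d(r)$ appearing in the proof of \cref{thm:optimal_contract} are identical for every $\gamma$ and every reporting strategy, so the agent still optimally chooses accuracy $\gamma^*$ and reports truthfully. The only change is on the principal's side: on each truthful $\signal_l$ report, with probability $\delta$ the principal takes the risky option rather than the safe one, losing $1 - E[\val \wgiven \signal_l]$ in expected value. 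Because $\val$ has finite support bounded by $\val_n$, the per-report loss is at most $\delta\cdot\val_n$, so the total revenue loss is at most $\delta\cdot\val_n$. Choosing $\delta = \epsilon/\val_n$ yields net return at least $\opt-\epsilon$.

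Finally, I would verify the three tier property. The proof of \cref{thm:optimal_contract} shows $r^*$ has at most two strictly positive payment cells across both reports. The modified mechanism pays values in $\{0\}\cup\{r^*(\signal_h,\val)\}\cup\{r^*(\signal_l,\val)/\delta\}$, which contains at most two distinct strictly positive numbers in addition to zero, and therefore satisfies \cref{def:3-tier contract}. The main obstacle will be formalizing the randomized override within the contract definition $S:\{\signal_l,\signal_h\}\times\reals\to\reals_+$, which as stated does not accommodate a randomized investment rule; the cleanest fix is to enrich the outcome space with a publicly observed override indicator, so that payment remains a deterministic function of observables. A secondary subtlety is the reliance on unbounded realized bonuses (payments of order $1/\delta$), which the lead-in to the proposition already flags; and two corner cases deserve a brief check: if both positive cells of $r^*$ lie on the $\signal_h$ side, no overriding is needed at all, while if both lie on the $\signal_l$ side, scaling produces two positive values that together with $0$ still form three tiers.
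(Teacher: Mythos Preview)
Your proposal is correct and follows essentially the same idea as the paper: randomly override the recommendation with small probability so that the risky return becomes observable, and scale the payment in those events by the inverse probability so that the agent's expected transfer---and hence her effort and truthful-reporting incentives---are exactly preserved. The only difference is cosmetic: the paper overrides uniformly on both reports (paying $S/\epsilon$ only in the override event and zero otherwise), whereas you override only on $\sigma_l$ and leave the $\sigma_h$ branch untouched; your loss bound $\delta v_n$ is also looser than necessary, since the per-override loss is at most the safe return $1$, so $\delta=\epsilon$ already suffices.
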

\begin{proof}
For any $\epsilon>0$, consider the decision rule where with probability $\epsilon$, the principal always chooses the risky investment regardless of the posterior belief reported by the expert. 
With the remaining probability $1-\epsilon$, the principal makes the optimal investment decision based on the posterior belief reported by the expert. 
Letting contract $S$ be the three tier contract that is optimal for the principal in perfectly observable environments, 
consider another contract $\hat{S}$ such that $\hat{S}$ only rewards the expert in the $\epsilon$ probability event where the risky investment is chosen regardless of the reports from the expert.
Moreover, for any report $\sigma$ and realized return $v$, 
we have $\hat{S}(\sigma,v) = \frac{1}{\epsilon}\cdot S(\sigma,v)$.
We remark that for contract $\hat{S}$, in the event that the principal chooses the investment based on the report of the agent, which occurs with probability $1-\epsilon$, the expert always receives a zero reward even when the expert recommends the risky investment in this event. 

In this construction, the expected rewards of the expert for any choice of accuracy $\gamma$ are the same given either contract $S$ in perfectly observable environments or contract $\hat{S}$ in partially observable environments, as the rewards occur with probability $\epsilon$ but are scaled by a multiplicative factor of $\frac{1}{\epsilon}$ in partially observable environments.
Therefore, the equilibrium accuracy level chosen by the expert and the expected equilibrium payment to the expert remain the same. 

Finally, since the principal chooses the risky investment with probability $\epsilon$ regardless of the report of the expert, this leads to a loss in expected returns for the principal. 
Note that in these events, the principal suffers a loss in risky investment only when the optimal investment is the safe investment. The loss conditional on that event is at most 1. 
Therefore, the expected loss in investment returns for the principal is at most $\epsilon$. 
Combining the observations, the expected loss in net returns is at most $\epsilon$. 
\end{proof}
Note that in \cref{prop:partial}, the principal can flexibly design the parameter $\epsilon>0$. 
With the choice of a smaller parameter $\epsilon$, the loss in the partially observable environments is smaller. 
The only caveat is that in the construction, we may need to scale up the realized reward of the agent by a multiplicative factor of $\frac{1}{\epsilon}$. This could be undesirable for the principal if they have a limited capacity for rewarding the expert. In that case, the principal needs to consider a more careful tradeoff between the realized rewards and the expected net returns. The details of those tradeoffs are beyond the scope of the current paper. 

Finally, for symmetric and MLRP environments, we can generalize the results of threshold contracts with negligible loss as well. The details for the latter extension are omitted here to avoid repetition.



\end{document}